\documentclass[preprint]{elsarticle}

\usepackage{array,xspace,multirow,hhline,tikz,colortbl,tabularx,booktabs,fixltx2e,amsmath,amssymb,amsfonts,amsthm}
\usepackage{algorithm}
\usepackage{algorithmic}
\usepackage{verbatim,ifthen}
\usepackage{enumitem}
\usepackage{pifont}
\usepackage{ifthen}
\usepackage{calrsfs,mathrsfs}
\usepackage{bbding,pifont}
\usepackage{pgflibraryshapes}
\usepackage{nicefrac}
\usepackage{url}
\definecolor{light-gray}{gray}{0.9}

\bibliographystyle{elsarticle-harv}





	\newtheorem{lemma}{Lemma}%
	\newtheorem{proposition}{Proposition}%

    \newtheorem*{theorem*}{Theorem}
        \newtheorem{definition}{Definition}



		\newcommand{\ml}[1][]{\ifthenelse{\equal{#1}{}}{\mathit{ML}}{\mathit{ML}(#1)}}
		\newcommand{\sml}[1][]{\ifthenelse{\equal{#1}{}}{\mathit{SML}}{\mathit{SML}(#1)}}
		\newcommand{\sd}[1][]{\ifthenelse{\equal{#1}{}}{\mathit{SD}}{\mathit{SD}(#1)}}
		\newcommand{\rsd}[1][]{\ifthenelse{\equal{#1}{}}{\mathit{RSD}}{\mathit{RSD}(#1)}}
		\newcommand{\st}[1][]{\ifthenelse{\equal{#1}{}}{\mathit{ST}}{\mathit{ST}(#1)}}
		\newcommand{\bd}[1][]{\ifthenelse{\equal{#1}{}}{\mathit{BD}}{\mathit{BD}(#1)}}
		\newcommand{\pc}[1][]{\ifthenelse{\equal{#1}{}}{\mathit{PC}}{\mathit{PC}(#1)}}
		\newcommand{\dl}[1][]{\ifthenelse{\equal{#1}{}}{\mathit{DL}}{\mathit{DL}(#1)}}
		\newcommand{\ul}[1][]{\ifthenelse{\equal{#1}{}}{\mathit{UL}}{\mathit{UL}(#1)}}

		\newcommand{\serdict}[1][]{\ifthenelse{\equal{#1}{}}{\sigma}{\sigma(#1)}}

\newcommand{\pav}[0]{\ensuremath{\mathit{PAV}}\xspace}

	\newcommand\eat[1]{}

	\usepackage{enumitem}
	\setenumerate[1]{label=\rm(\it{\roman{*}}\rm),ref=({\it\roman{*}}),leftmargin=*}
	\newlength{\wordlength}

	\newcommand{\midd}{\mathbin{:}}



	\newcommand{\eqclass}[2][]{\ifthenelse{\equal{#1}{}}{[#2]}{[#2]_{\sim_{#1}}}}

%

	%


	
	\newcommand{\cross}{\ding{55}}

	\newcommand{\Pref}[1][]{
		\ifthenelse{\equal{#1}{}}{\mathrel R}{\mathop{R_{#1}}}
	}                                          
	\newcommand{\sPref}[1][]{                  
		\ifthenelse{\equal{#1}{}}{\mathrel P}{\mathop{P_{#1}}}
	}                                          
	\newcommand{\Indiff}[1][]{                 
		\ifthenelse{\equal{#1}{}}{\mathrel I}{\mathop{I_{#1}}}
	}
	\newcommand{\prefset}[1][]{\ifthenelse{\equal{#1}{}}{\mathcal{R}}{\mathcal{R}_{#1}}}



\definecolor{PurplePlum}{rgb}{0.1,0,0.55} 
\definecolor{Brown}{rgb}{0.5,.25,0}
\definecolor{Green}{rgb}{0,.5,0}
\definecolor{Orange}{rgb}{1,.5,0}
\definecolor{Gray}{rgb}{0.5,0.5,0.5}
\definecolor{Black}{rgb}{0,0,0}

%
\usepackage{color} 
\usepackage[normalem]{ulem} 

\newcommand{\jr}{\ensuremath{\mathit{JR}}\xspace}
\newcommand{\ejr}{\ensuremath{\mathit{EJR}}\xspace}

\newcommand{\pjr}{\ensuremath{\mathit{PJR}}\xspace}



\newcommand{\calA}{{\vec{A}}}

\sloppy

\begin{document}


	\title{A Polynomial-time Algorithm to Achieve\\ Extended Justified Representation
	}

	\author{Haris Aziz}
	 \ead{haris.aziz@data61.csiro.au}
      \address{Data61, CSIRO and UNSW Australia\\
  Computer Science and Engineering,
     Building K17, UNSW, Sydney NSW 2052, Australia}

		\author{Shenwei Huang} \ead{shenwei.huang@unsw.edu.au}
	      \address{UNSW Australia\\
	      Computer Science and Engineering,
	      Building K17, UNSW, Sydney NSW 2052, Australia}
	


	\begin{abstract}
				We consider a committee voting setting in which each voter approves of a subset of candidates and based on the approvals, a target number of candidates are to be selected. In particular we focus on the axiomatic property called extended justified representation (EJR). Although a committee satisfying EJR is guaranteed to exist, the computational complexity of finding such a committee has been an open problem and explicitly mentioned in multiple recent papers.
We settle the complexity of finding a committee satisfying EJR by presenting a polynomial-time algorithm for the problem. 
Our algorithmic approach may be useful for constructing other multi-winner voting rules.
	\end{abstract}


\begin{keyword}
Social choice theory \sep 
committee voting \sep
multi-winner voting \sep
approval voting\sep
computational complexity
\\
	\emph{JEL}: C63, C70, C71, and C78
\end{keyword}

\maketitle

\section{Introduction}


The topic of multi-winner/committee voting has witnessed a renaissance with a number of new and interesting developments in the last few years~(see \citep{ABES17a,FSST17a} for recent surveys). We consider a committee voting setting in which each voter approves of a subset of candidates and based on the approvals, a target $k$ number of candidates are selected. The setting has been referred to as approval-based multi-winner voting or committee voting with approvals. The setting has inspired a number of natural voting rules~\citep{Kilg10a,BrFi07c,LMM07a,AGG+15a,SFL16a}. Many of the voting rules attempt to satisfy some notion of representation. However it has been far from clear what axiom captures the representation requirements.

\citet{ABC+15a,ABC+16a} proposed two compelling representation axioms called \emph{justified representation (\jr)} and \emph{extended justified representation (\ejr)}. 
Interestingly, \citet{SFFB16a} presented an intermediate property called \emph{proportional justified representation (\pjr)}. \footnote{The property \pjr was independently proposed by Haris Aziz in October 2014 who referred to it as weak \ejr.} 
The idea behind all the three properties is that a cohesive and large enough group deserves sufficient number of approved candidates in the winning set of candidates. Interestingly, it is known that there always exists a committee satisfying the strongest property~\ejr~\citep{ABC+16a}. However to date, it has been unknown whether a committee satisfying \ejr can be computed in polynomial time. For the two weaker representation notions, 
polynomial-time algorithms have been presented for finding a committee satisfying \jr~\citep{ABC+15a,ABC+16a}\footnote{For \jr, a simple linear-time algorithm call GreedyAV finds a committee satisfying \jr.} and \pjr~\citep{BFJL16a,SFF16a}\footnote{It has recently been shown that a committee satisfying \pjr can be computed in polynomial time. \citet{BFJL16a} proved that SeqPhragm\'{e}n (an algorithm proposed by Swedish mathematician Phragm\'{e}n in the 19th century) is polynomial-time and returns a committee satisfying \pjr.  Independently and around the same time as the result by \citet{BFJL16a}, \citet{SFF16a} presented a different algorithm that finds a \pjr committee and also satisfies other desirable monotonicity axioms.}.
On the other hand, the computational complexity of finding a committee satisfying \ejr has been open.  \citet{ABC+15a,ABC+16a} mentioned the problem in their original paper. The problem has been reiterated in subsequent work. \citet{BFJL16a} state that 
\begin{quote}
``\emph{it remains an open problem whether committees providing EJR can be computed efficiently}.''
\end{quote}
\citet{SFF16a} mention the same problem:
\begin{quote}
``\emph{Whether a voting rule exists that satisfies the extended justified representation and can be computed in polynomial time remains an open issue.}'' 
\end{quote}

In a different paper, \citet{SFF+17a} state the following.
\begin{quote}
\emph{``In contrast, it is conjectured that finding committees that provide EJR is computationally hard.''}
\end{quote}

Incidentally, there exists an interesting rule called \pav (\emph{Proportional Approval Voting}) that satisfies \ejr~\citep{ABC+16a}. In \pav, each voter is viewed as getting an additional score of $1/j$ for getting the $j$-th approved candidate in the committee. The \pav rule returns a committee with the highest total \pav score for the voters.
The \pav rule has a fascinating history as it was  proposed by the Danish polymath Thorvald N. Thiele in the 19th century and then rediscovered by Forrest Simmons~\citep{Jans16a}. Finding a \pav outcome is NP-hard~\cite{AGG+15a,SFL16a} and W[1]-hard even if each voter approves of 2 candidates~\cite{AGG+15a}. 
Thiele also presented a greedy sequential version of PAV. The rule that is referred to as SeqPAV (Sequential PAV) or RAV (reweighted approval voting) does not even satisfy \jr~\citep{ABC+16a}.



One natural approach to find a committee satisfying \ejr is to enumerate possible committees and then test them for \ejr. However the number of committees is exponential and even testing whether a committee satisfies \ejr is coNP-complete \cite{ABC+16a}. \citet{ABC+16a} presented a result that implies that if $k$ is a constant, then a committee satisfying \ejr can be computed in time $poly(n\cdot |C|^k)$. The result does not show that finding a committee satisfying \ejr is polynomial-time solvable in general or whether it is fixed parametrized tractable.

\bigskip
\paragraph{Contributions}

We present the first polynomial-time algorithm to find a committee that satisfies \ejr. The result implies that there exists a polynomial-time algorithm to find a committee that satisfies the weaker property of \pjr. As mentioned earlier, it has only recently been proven in two independent papers that a committee satisfying \pjr can be computed in polynomial time~\citep{BFJL16a,SFF16a}. Both of the algorithms in~\citep{BFJL16a} and \citep{SFF16a} sequentially build a committee while optimizing some flow or load balancing objective. 
In contrast, our algorithm uses an approach based on swapping candidates from inside a committee with candidates from outside the committee. 
The correctness of our algorithm relies on a careful insight on the connection between \ejr and a property we refer to  as  \pav-swap-freeness. 
We feel that this simple idea of allowing swaps may lead to other interesting algorithms for \ejr as well as other compelling properties in multi-winner voting problems.





\section{Approval-based Committee Voting and Representation Properties}

We consider a social choice setting with a set $N=\{1,\ldots, n\}$ of voters and a set $C$ of $m$ candidates. 
Each voter $i\in N$ submits an approval ballot $A_i\subseteq C$, which represents the subset of candidates that she
approves of. We refer to the list $\calA = (A_1,\ldots, A_n)$ of approval ballots as the {\em ballot profile}. 
We will consider {\em approval-based multi-winner voting rules} that take as input a quadruple $(N, C, \calA, k)$, 
where $k$ is a positive integer that satisfies $k\le m$, and return a subset 
$W \subseteq C$ of size $k$, which we call the {\em winning set}, or {\em committee}. 

%


\begin{definition}[Justified representation (JR)]
Given a ballot profile $\calA = (A_1, \dots, A_n)$ over a candidate set $C$ and a target committee size $k$,
we say that a set of candidates $W$ of size $|W|=k$ {\em satisfies justified representation 
for $(\calA, k)$} if 
\[\forall X\subseteq N: |X|\geq \frac{n}{k} \text{ and } |\cap_{i\in X}A_i|\geq 1 \implies (|W\cap (\cup_{i\in X}A_i)|\geq 1)\]
\end{definition}

The rationale behind this definition is that if $k$ candidates are to be selected, then, intuitively,
each group of $\frac{n}{k}$ voters ``deserves'' a representative. Therefore, a set of $\frac{n}{k}$ voters 
that have at least one candidate in common should not be completely unrepresented.

\begin{definition}[Proportional Justified Representation (\pjr)]
Given a ballot profile $(A_1, \dots, A_n)$ over a candidate set $C$, a target committee size $k$, $k\le m$, and integer $\ell$
we say that a set of candidates $W$, $|W|=k$, {\em satisfies $\ell$-proportional justified representation
for $(\calA, k)$}  if
\[\forall X\subseteq N: |X|\geq \ell\frac{n}{k} \text{ and } |\cap_{i\in X}A_i|\geq \ell \implies (|W\cap (\cup_{i\in X}A_i)|\geq \ell)\]

We say that $W$ {\em satisfies proportional justified representation for $(\calA, k)$} if it {satisfies $\ell$-proportional justified representation
for $(\calA, k)$} and all integers $\ell\leq k$.
\end{definition}

\begin{definition}[Extended justified representation (\ejr)]
Given a ballot profile $(A_1, \dots, A_n)$ over a candidate set $C$, a target committee size $k$, $k\le m$,
we say that a set of candidates $W$, $|W|=k$, {\em satisfies  $\ell$-extended justified representation
for $(\calA, k)$}  and integer $\ell$ if
\[\forall X\subseteq N: |X|\geq \ell\frac{n}{k} \text{ and } |\cap_{i\in X}A_i|\geq \ell \implies (\exists i\in X: |W\cap A_i|\geq \ell).\]

We say that $W$ {\em satisfies extended justified representation for $(\calA, k)$} if it {satisfies $\ell$-extended justified representation
for $(\calA, k)$} and all integers $\ell\leq k$.

\end{definition}

It is easy to observe that \ejr implies \pjr which implies \jr. So any committee that satisfies \ejr also satisfies the other two properties.

%
%

\section{PAV-score and Swaps}


The \pav-score of a voter $i$ for a committee $W$ is 
\[H(|W\cap A_i|)\]
where 
\[
H(p)= \begin{cases}
0, \text{ for $p=0$}\\
\sum_{j=1}^p\frac{1}{j}, \text{ for $p>0$.}\\
\end{cases}
\]

The \pav-score of a committee $W \subseteq C$ 
is defined as 
\[\sum_{i\in N}H(|W\cap A_i|).\] 

The $\pav$ rule that we discussed in the introduction outputs a set $W \subseteq C$ of size $k$
with the highest \pav-score. 




We say that a committee $W$ such that $|W|=k$ satisfies \emph{\pav-swap-freeness} if there exists no $c'\in W, c\in C\setminus W$ s.t. $\textit{PAV-score}((W\setminus \{c'\})\cup  \{c\})>\textit{PAV-score}((W))$.
Note that if a committee $W$ has the highest possible \pav-score, it satisfies \pav-swap-freeness.

We now define marginal contribution as used in \citep{ABC+16a}. 
           For each candidate $w\in W$, we define $MC(w,W)$ its {\em marginal contribution} as the difference between the \pav-score of $W$ and that of $W\setminus\{w\}$:
	\[MC(w,W)=\textit{PAV-score}((W)-\textit{PAV-score}((W\setminus \{w\}).\]
	
Let $MC(W)$ denote the sum of marginal contributions of all candidates in $W$:
\[MC(W)=\sum_{w\in W}MC(w,W).\]

We now formally state as a lemma an observation that was already made in \citep{ABC+16a}.

	%

	\begin{lemma}\label{lemma:mc}
		For any committee $W$ such that $|W|=k$,
		$\sum_{c\in W}MC(c,W)\leq |\{i\in N\midd A_i\cap W\neq \emptyset\}|$. Moreover there exists at least one $c\in W$ such that $MC(c,W)\leq |\{i\in N\midd A_i\cap W\neq \emptyset\}|/k\leq n/k$.
		\end{lemma}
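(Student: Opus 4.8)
The plan is a direct double-counting argument, whose crux is the telescoping identity $H(p)-H(p-1)=\tfrac1p$, valid for every integer $p\ge 1$ (with $H(0)=0$).

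First I would obtain a closed form for the marginal contribution of a single candidate. Fix $c\in W$ and compare the PAV-scores of $W$ and $W\setminus\{c\}$ voter by voter. A voter $i$ with $c\notin A_i$ has $|W\cap A_i|=|(W\setminus\{c\})\cap A_i|$, hence contributes $0$ to the difference. A voter $i$ with $c\in A_i$ has $|(W\setminus\{c\})\cap A_i|=|W\cap A_i|-1$ with $|W\cap A_i|\ge 1$, so her contribution is $H(|W\cap A_i|)-H(|W\cap A_i|-1)=1/|W\cap A_i|$. Summing over voters gives
\[MC(c,W)=\sum_{i\in N:\,c\in A_i}\frac{1}{|W\cap A_i|}.\]

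Next I would sum this over all $c\in W$ and interchange the order of summation:
\[\sum_{c\in W}MC(c,W)=\sum_{i\in N}\ \sum_{c\in W\cap A_i}\frac{1}{|W\cap A_i|}.\]
For each voter $i$ the inner sum equals $|W\cap A_i|\cdot\frac{1}{|W\cap A_i|}=1$ when $W\cap A_i\neq\emptyset$ and equals $0$ otherwise; hence $\sum_{c\in W}MC(c,W)=|\{i\in N\midd A_i\cap W\neq\emptyset\}|$, which in particular is $\le$ that quantity. The ``moreover'' part then follows by averaging: the $k$ nonnegative numbers $MC(c,W)$, $c\in W$, sum to at most $|\{i\in N\midd A_i\cap W\neq\emptyset\}|$, so at least one of them is at most $|\{i\in N\midd A_i\cap W\neq\emptyset\}|/k$, and this is at most $n/k$ since the set of represented voters is contained in $N$.

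I do not expect a genuine obstacle here; the only points needing care are that voters not represented by $W$ contribute nothing to either sum, and that $|W\cap A_i|\ge 1$ whenever $c\in A_i\cap W$, so that every reciprocal appearing above is well defined.
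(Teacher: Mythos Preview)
Your proof is correct and follows essentially the same double-counting argument as the paper: each represented voter contributes exactly $1$ to $\sum_{c\in W}MC(c,W)$ (and unrepresented voters contribute $0$), yielding the bound, after which the averaging step gives the ``moreover'' clause. Your derivation is slightly more explicit in first isolating the closed form $MC(c,W)=\sum_{i:\,c\in A_i}1/|W\cap A_i|$ before interchanging sums, but the underlying idea is identical.
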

		\begin{proof}

			Pick a voter $i\in N$, and let $j=|A_i\cap W|$. If $j>0$,  this voter contributes exactly $\frac{1}{j}$ to the marginal contribution of each candidate in $A_i\cap W$, 
			and hence her contribution to $MC(W)$ is exactly $1$.
			If $j=0$, this voter does not contribute to $MC(W)$ at all.
			Therefore, we have $MC(W)=\sum_{c\in W}MC(c,W)\leq |\{i\in N\midd A_i\cap W\neq \emptyset\}|\leq n$. 
			Since there are exactly $k$ candidates, there exists some $c\in W$ such that $MC(c,W)\leq |\{i\in N\midd A_i\cap W\neq \emptyset\}|/k\leq n/k$.
			\end{proof}

	We now prove that if a committee satisfies \pav-swap-freeness, then it satisfies \ejr. The argument is almost identical to the argument that the outcome of \pav satisfies \ejr\citep[][]{ABC+16a}. However, we reproduce it just for the sake of completeness because we will further refine this argument.

		\begin{lemma}\label{thm:pav-ejr}
		If a committee satisfies \pav-swap-freeness, then it satisfies \ejr.
		\end{lemma}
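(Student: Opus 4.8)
The plan is to argue by contradiction: suppose $W$ satisfies \pav-swap-freeness but violates \ejr. Then there is an integer $\ell \le k$ and a set $X \subseteq N$ with $|X| \ge \ell n/k$ and $|\cap_{i\in X} A_i| \ge \ell$, yet $|W \cap A_i| \le \ell - 1$ for every $i \in X$. Fix a set $C^* \subseteq \cap_{i\in X} A_i$ with $|C^*| = \ell$. Since each voter in $X$ approves at most $\ell - 1$ candidates of $W$, there must be some candidate $c \in C^* \setminus W$ (otherwise all $\ell$ candidates of $C^*$ would lie in $W \cap A_i$ for each $i \in X$). I would add this $c$ to $W$ and estimate the resulting gain in \pav-score.

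The key computation is a lower bound on how much adding $c$ helps. Each voter $i \in X$ currently has $|W \cap A_i| = j_i \le \ell - 1$, so adding $c$ (which $i$ approves) raises $i$'s \pav-score by $\tfrac{1}{j_i + 1} \ge \tfrac{1}{\ell}$. Hence the total gain from adding $c$ is at least $|X| \cdot \tfrac{1}{\ell} \ge \tfrac{\ell n / k}{\ell} = n/k$. Next I would invoke \lemref{lemma:mc}: there exists a candidate $c' \in W$ with marginal contribution $MC(c', W) \le n/k$, i.e.\ removing $c'$ from $W$ costs at most $n/k$ in \pav-score. The delicate point is that these two moves (adding $c$, removing $c'$) need to be combined into a single swap, and removing $c'$ after adding $c$ is not obviously the same as removing $c'$ from $W$; but since $c \notin W$, the quantity $|A_i \cap (W \cup \{c\})| \ge |A_i \cap W|$ for every voter, and the marginal contribution of $c'$ in the larger set $W \cup \{c\}$ is at most its marginal contribution in $W$ (because $H$ is concave, so a voter's incremental value for $c'$ only decreases when she already has more approved candidates). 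Thus removing $c'$ from $W \cup \{c\}$ costs at most $MC(c', W) \le n/k$.

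Putting it together: the swap $W \mapsto (W \setminus \{c'\}) \cup \{c\}$ changes the \pav-score by at least $n/k - n/k \ge 0$, and I need this to be strictly positive to contradict \pav-swap-freeness. To get strictness I would look more carefully: either some voter in $X$ has $j_i < \ell - 1$ (giving a strictly larger gain than $n/k$), or the inequality $|X| \ge \ell n/k$ or the bound from \lemref{lemma:mc} is strict, or one handles the boundary case by a slightly sharper accounting — e.g., picking $c'$ to be a candidate of minimum marginal contribution \emph{in $W \cup \{c\}$}, whose marginal contribution is then at most $(\text{something} < n/k)$ unless the \pav-scores are all perfectly balanced, in which case a separate short argument (or choosing $c'\notin C^*$) closes the gap. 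I expect this strictness/boundary bookkeeping to be the main obstacle; the concavity argument that a swap never does worse than "add then remove the cheapest" is the conceptual core and is otherwise routine. The rest reduces to the same inequalities used in \citet{ABC+16a} to show \pav itself satisfies \ejr, now applied to the weaker swap-freeness hypothesis.
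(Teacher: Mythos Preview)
Your setup and the main inequality (adding $c$ gains at least $n/k$, removing the cheapest $c'$ costs at most $n/k$) match the paper exactly, and your concavity observation is correct. The gap is precisely where you flagged it: strictness in the boundary case. Your list of possible fixes is left unresolved, and several of them (e.g.\ hoping $|X|>\ell n/k$ or that some $j_i<\ell-1$) need not hold. The paper handles this by a direct case split rather than via concavity. If the minimum $MC(c',W)<n/k$, the swap strictly improves and we are done. Otherwise every candidate in $W$ has marginal contribution exactly $n/k$, so by the sharper form of \lemref{lemma:mc} (namely $MC(W)\le|\{i:A_i\cap W\neq\emptyset\}|$) every voter $i\in X$ must have $A_i\cap W\neq\emptyset$. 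Now pick $w'\in W\cap A_i$ for some fixed $i\in X$; after removing $w'$ this voter has at most $\ell-2$ representatives, so adding $c$ gives her $\tfrac{1}{\ell-1}$ instead of $\tfrac{1}{\ell}$. Hence the gain from adding $c$ to $W\setminus\{w'\}$ is at least $(s-1)\cdot\tfrac{1}{\ell}+\tfrac{1}{\ell-1}>\tfrac{s}{\ell}\ge n/k=MC(w',W)$, giving a strict improvement. (If $\ell=1$, the conclusion $A_i\cap W\neq\emptyset$ already contradicts the \ejr violation, so one may assume $\ell\ge 2$ here.)

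Incidentally, your own suggestion of choosing $c'$ to minimise $MC(\,\cdot\,,W\cup\{c\})$ would have closed the gap without a case split, had you carried it through: the proof of \lemref{lemma:mc} applies verbatim to the size-$(k{+}1)$ set $W\cup\{c\}$, giving $\sum_{w\in W\cup\{c\}}MC(w,W\cup\{c\})\le n$; since $MC(c,W\cup\{c\})\ge n/k$, the remaining $k$ candidates in $W$ have total at most $n-n/k$, so the minimum among them is at most $n/k-n/k^{2}<n/k$, and the swap gains at least $n/k^{2}>0$. You just needed to execute this rather than hedge.
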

		\begin{proof}
			Suppose that there is a committee $W$ such that $|W|=k$ that satisfies \pav-swap-freeness but violates \ejr. 
		Since $W$ violates \ejr, there is a value of $\ell\geq 1$ and a set of voters $N^*$,
		$|N^*| = s \ge \ell\cdot \frac{n}{k}$. 
		We know that at least one of the $\ell$ candidates approved by all voters in $N^*$ is not elected; let $c$ be some such candidate.
	Each voter in $N^*$ has at most $\ell-1$ representatives in $W$, so the marginal contribution of $c$ (if it were to be added to $W$) would be at least $s\cdot \frac{1}{\ell} \ge \frac{n}{k}$. On the other hand, by Lemma~\ref{lemma:mc}, we have $\sum_{c\in W}MC(c,W)\leq n$.

		Now, consider some candidate $w\in W$ with the smallest marginal
		contribution; clearly, his marginal contribution is at most $\frac{n}{k}$. 
		If it is strictly less than $\frac{n}{k}$, we are done, as we can improve the
		total \pav-score by swapping $w$ and $c$, a contradiction.

		 Therefore
		suppose it is exactly $\frac{n}{k}$, and therefore the marginal contribution
		of each candidate in $W$ is exactly $\frac{n}{k}$. We know that 
	$A_i\cap W\neq \emptyset$ for each $i\in N^*$, because otherwise $\sum_{w'\in W}MC(w',W)\leq n-1$ (by Lemma~\ref{lemma:mc}) which implies that the marginal contribution of $w$ is less than $\frac{n}{k}$.
Hence we know that $A_i\cap W\neq\emptyset$ for all $i\in N^*$. Pick some
		candidate $w'\in W\cap A_i$ for some $i\in N^*$, and set $W'=(W\setminus\{w'\})\cup \{c\}$.
		Observe that after $w'$ is removed, adding $c$ increases
		the total \pav-score by at least 
	\begin{align*}
	(s-1)\cdot \frac{1}{\ell}+\frac{1}{\ell-1}
	>\frac{s}{\ell}
	\geq n/k.
	\end{align*}

		Thus, the \pav-score of $W'$ is higher than that of $W$, a contradiction again.
	\end{proof}

Although \pav-swap-freeness is a much weaker property than maximizing total \pav-score,  it is surprising that it already implies \ejr. In the next section, this insight helps us to come up with useful algorithms.

		\section{MaxSwapPAV}
		
Based on \pav-score improving swaps, one can formulate the following algorithm called SwapPAV.
		\begin{quote}
			SwapPAV: Start from a random committee of size $k$. Keep implementing swaps that increase the total \pav-score of the committee while such a swap is possible. Return the committee if no more improving swaps are possible. 
			\end{quote}
		
Our first observation is that Swap-PAV always terminates. The reason is that each time we implement the swap, the PAV-score of the committee increases. This can only happen finitely often as $\textit{PAV-score}((W)\le nH(k)\le n(\ln k +1)$ for any committee $W$ of size $k$.
In fact, one can easily prove that with each improving swap, the \pav-score increases by at least $1/k!$ so that that the total number of swaps cannot exceed $n(\ln k+1)k!$. This observation already gives us the first FPT algorithm for finding a committee satisfying \ejr. 

		We now show how we can modify SwapPAV to find a committee satisfying \ejr in polynomial time. We  modify SwapPAV as follows. If $W$ is not \pav-swap-free, then we look at all possible swaps and only implement the swap which makes biggest difference to the \pav-score. 
We impose an extra condition that we only swap if the improvement in the total \pav-score is at least $\frac{1}{2k^3}$.		
The algorithm is specified as Algorithm~\ref{algo:maxswappav} (MaxSwapPAV).

							\begin{algorithm}[h!]
								  \caption{MaxSwapPAV}
								  \label{algo:maxswappav}

								\begin{algorithmic}
									\REQUIRE  $(N,\calA, k)$.
									\ENSURE $W$
								\end{algorithmic}
								\begin{algorithmic}[1]
									\STATE $W\longleftarrow $ any committee of size $k$.
									
			%
			%
			%
			%
			%
									
															\WHILE{$\exists c'\in W, c\in C\setminus W$ s.t. $\textit{PAV-score}((W\setminus \{c'\})\cup  \{c\})-\textit{PAV-score}(W)\geq 1/2k^3$}
															\FOR{each $c'\in W, c\in C\setminus W$}
\STATE $\textit{diff}(c,c')=\textit{PAV-score}((W\setminus \{c'\})\cup  \{c\})-\textit{PAV-score}(W)$ 				
\ENDFOR
\STATE Find $c'\in W, c\in C\setminus W$ with the maximum $\textit{diff}(c,c')$.
									\STATE $W\longleftarrow (W\setminus \{c'\})\cup  \{c\}$
									\ENDWHILE

								
									\RETURN $W$.
								\end{algorithmic}
							\end{algorithm}

We now argue why MaxSwapPAV returns a committee satisfying \ejr and it terminates in polynomial time. The most crucial lemma for both statements is Lemma~\ref{lemma:crucial}. The lemma is stronger than Lemma~\ref{thm:pav-ejr} and requires a more careful analysis.


	\begin{lemma}\label{lemma:crucial}
	Suppose that $W$ does not satisfy \ejr, then there exist $c'\in W, c\in C\setminus W$ s.t. $\textit{PAV-score}((W\setminus \{c'\})\cup  \{c\})-\textit{PAV-score}((W))\geq 1/2k^3$.	
	\end{lemma}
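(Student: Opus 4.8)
The plan is to refine the argument behind \lemref{thm:pav-ejr}, keeping track of \emph{how large} a swap improvement we can guarantee rather than just its positivity. Assume $W$ violates \ejr, and fix an integer $\ell$ with $1\le \ell\le k$ together with a coalition $N^*\subseteq N$ witnessing the violation: $s:=|N^*|\ge \ell n/k$, the voters in $N^*$ have at least $\ell$ commonly approved candidates, and $|W\cap A_i|\le \ell-1$ for every $i\in N^*$. Since $|W\cap A_i|\le\ell-1$ forbids all $\ell$ common candidates from lying in $W$, I can fix a common candidate $c\notin W$. The basic observation, exactly as in \lemref{thm:pav-ejr}, is that adding $c$ to any subset $W'\subseteq W$ raises the \pav-score by at least $s/\ell\ge n/k$, because each of the $s$ voters of $N^*$ approves $c$ and has at most $\ell-1$ representatives in $W'$.

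Next I would split on the minimum marginal contribution inside $W$. \textbf{Case A:} some $w\in W$ has $MC(w,W)\le n/k-1/2k^3$. Then replacing $w$ by $c$ changes the \pav-score by $MC\bigl(c,(W\setminus\{w\})\cup\{c\}\bigr)-MC(w,W)\ge n/k-(n/k-1/2k^3)=1/2k^3$, and we are done.

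\textbf{Case B:} every $w\in W$ has $MC(w,W)> n/k-1/2k^3$. Here I would invoke the \emph{exact} identity established in the proof of \lemref{lemma:mc}, namely $\sum_{w\in W}MC(w,W)=|\{i\in N\midd A_i\cap W\neq\emptyset\}|$. The left-hand side now strictly exceeds $n-1/2k^2>n-1$; since it is a nonnegative integer that is also at most $n$, it must equal $n$, so \emph{every} voter is represented by $W$. Two consequences follow: a voter in $N^*$ now has a representative, so $\ell\ge 2$; and for each $w$, $MC(w,W)=n-\sum_{w'\ne w}MC(w',W)< n/k+(k-1)/2k^3< n/k+1/2k^2$. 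Now pick any $i\in N^*$ and a representative $w'\in W\cap A_i$, and replace $w'$ by $c$. Removing $w'$ drops voter $i$ to at most $\ell-2$ representatives, so her contribution to the gain from adding $c$ is at least $1/(\ell-1)$, while each of the remaining $s-1$ voters of $N^*$ still contributes at least $1/\ell$; hence $MC\bigl(c,(W\setminus\{w'\})\cup\{c\}\bigr)\ge s/\ell+\tfrac{1}{\ell(\ell-1)}\ge n/k+1/k^2$, using $\ell(\ell-1)\le \ell^2\le k^2$. Therefore the \pav-score changes by more than $(n/k+1/k^2)-(n/k+1/2k^2)=1/2k^2\ge 1/2k^3$.

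The main obstacle is exactly Case B: a direct reading of \lemref{thm:pav-ejr} only produces a swap whose improvement can be arbitrarily close to $0$ when the smallest marginal contribution sits just below $n/k$. The new ingredient that resolves this is the integrality of $|\{i\in N\midd A_i\cap W\neq\emptyset\}|$: squeezing all $k$ marginal contributions into the window $(n/k-1/2k^3,\,n/k+1/2k^2)$ forces that count above $n-1$, hence equal to $n$, which simultaneously forces $\ell\ge 2$ (so that removing a representative of $i$ genuinely frees a $1/(\ell-1)$-sized increment) and pins each marginal contribution down tightly enough to beat the $1/2k^3$ threshold. The particular constant $1/2k^3$ is then just what is needed to balance the guaranteed improvements across the two cases.
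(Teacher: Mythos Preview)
Your proof is correct and follows essentially the same two-case structure as the paper, with the same threshold $n/k-1/2k^3$, the same swap candidates, and the same final arithmetic. The one genuine tactical difference is in Case~B: you invoke the \emph{exact} identity $\sum_{w\in W}MC(w,W)=|\{i:A_i\cap W\neq\emptyset\}|$ (which indeed follows from the proof of \lemref{lemma:mc}, even though the lemma is stated as an inequality) and use integrality to force this count to equal $n$, so that \emph{every} voter is represented and $\ell\ge 2$ follows immediately. The paper instead argues directly that if no voter in $N^*$ were represented then $\sum_{w}MC(w,W)\le n-|N^*|\le n-n/k$, giving $MC(w,W)\le n/k-n/k^2<n/k-1/2k^3$, a contradiction; $\ell\ge 2$ is then implicit from the existence of $w'\in W\cap\bigcup_{i\in N^*}A_i$. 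Your integrality argument is a slightly cleaner way to get both conclusions at once, while the paper's version needs only the inequality form of \lemref{lemma:mc}; either route yields the same $1/2k^2$ improvement in Case~B.
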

	
	\begin{proof}
	Since $W$ violates \ejr, there is a value of $\ell\ge 1$ and a set of voters $N^*$,
	$|N^*| = s \ge \ell\cdot \frac{n}{k}$. 
	We know that at least one of the $\ell$ candidates approved by all voters in $N^*$ is not elected; let $c$ be some such candidate.
	Each voter in $N^*$ has at most $\ell-1$ representatives in $W$, so the marginal contribution of $c$ (if it were to be added to $W$) would be at least $s\cdot \frac{1}{\ell} \ge \frac{n}{k}$.
	
	Let $w\in W$ be the candidate with smallest marginal contribution. By Lemma \ref{lemma:mc}, $MC(w,W)\le \frac{n}{k}$.
	If $MC(w,W)\le \frac{n}{k}-\frac{1}{2k^3}$, then replacing $w$ with $c$ results in a committee $W'$ with PAV-score
	increasing by at least $\frac{1}{2k^3}$, and so we are done.
	
	So, we assume that $MC(w,W)> \frac{n}{k}-\frac{1}{2k^3}$.
This implies that $MC(w',W)> \frac{n}{k}-\frac{1}{2k^3}$
	for every $w'\in W$. Since $\sum_{w'\in W}MC(w',W)\le n$, we use this fact to find the maximum possible marginal contribution among all candidates in $W$. Let the maximum marginal contribution be $MC(b,W)$ of candidate $b$. In that case we know that
	\begin{align*}
		&\sum_{w'\in W}MC(w',W)\leq n\\
		\iff&\sum_{w'\in W\setminus \{b\}}MC(w',W)+MC(b,W)\leq n\\
		\iff&MC(b,W)\leq n-\sum_{w'\in W\setminus \{b\}}MC(w',W)\\
		\implies&MC(b,W)< n-(k-1)(\frac{n}{k}-\frac{1}{2k^3})\\
				\iff&MC(b,W)< \frac{(2nk^3-2nk^3+2nk^2+k-1)}{2k^3}\\
				\iff&MC(b,W)< \frac{n}{k}+\frac{1}{2k^2}-\frac{1}{2k^3}<\frac{n}{k}+\frac{1}{2k^2}.
	\end{align*}

	 Hence, it follows that $MC(w',W)<\frac{n}{k}+\frac{1}{2k^2}$ for every
	$w'\in W$.

	We now claim that there is a candidate $w'\in W$ that is also in $\bigcup_{i\in N^*}A_i$.
	Suppose not. This means that no one in $N^*$ approves of anybody in $W$ and 
	so by Lemma~\ref{lemma:mc}, $\sum_{w'\in W}MC(w',W)\le |N\setminus N^*|\le n-\frac{n}{k}$.
	Thus, $MC(w,W)\le \frac{n}{k}-\frac{n}{k^2}<\frac{n}{k}-\frac{1}{2k^3}$, contradicting our assumption that
	 $MC(w,W)> \frac{n}{k}-\frac{1}{2k^3}$. Now pick any $w'\in W\cap \bigcup_{i\in N^*}A_i$.
	 As $w\in \bigcup_{i\in N^*}A_i$, this implies that $|A_i\cap (W\setminus \{w'\})|\leq \ell-2$ for some $i\in N^*$. Hence, 
	 $MC(c,(W\setminus \{w'\})\cup \{c\})$ would be at least $\frac{n}{k}+\frac{1}{\ell-1}-\frac{1}{\ell}\ge \frac{n}{k}+\frac{1}{k^2}$.
	 Therefore, replacing $w'$ with $c$ results in a committee $W'$ with PAV-score increasing by at least 
	 $\frac{1}{k^2}-\frac{1}{2k^2}=\frac{1}{2k^2}\ge \frac{1}{2k^3}$. This proves the lemma.
	\end{proof}
	
Lemma~\ref{lemma:crucial} is the foundation for proving the main properties of the MaxSwapPAV algorithm.
	
	\begin{proposition}
					MaxSwapPAV returns a committee that satisfies \ejr. 
					\end{proposition}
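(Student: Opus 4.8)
The plan is to obtain the proposition almost immediately from Lemma~\ref{lemma:crucial}, once we know that MaxSwapPAV actually halts. So there are really two things to check: that the algorithm terminates and returns \emph{some} committee $W$ with $|W|=k$, and that the condition under which it returns $W$ is incompatible with violating \ejr.

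First I would argue termination. Each pass through the \textbf{while} loop replaces the current committee by one whose \pav-score is strictly larger, and the loop guard guarantees the gain is at least $\frac{1}{2k^3}$. Since $\textit{PAV-score}(W)\le nH(k)\le n(\ln k+1)$ for every size-$k$ committee $W$, the loop can be executed at most $2k^3 n(\ln k+1)$ times; hence the algorithm halts and outputs a committee $W$ with $|W|=k$. (If the starting committee already satisfies \ejr, the loop body is simply never entered, which is fine.) Next, I would spell out the exit condition: the algorithm returns $W$ precisely when the \textbf{while} guard fails, i.e.\ when there is no pair $c'\in W$, $c\in C\setminus W$ with $\textit{PAV-score}((W\setminus\{c'\})\cup\{c\})-\textit{PAV-score}(W)\ge\frac{1}{2k^3}$.

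Finally I would invoke the contrapositive of Lemma~\ref{lemma:crucial}: that lemma asserts that any committee failing \ejr admits a swap improving the \pav-score by at least $\frac{1}{2k^3}$; since the returned $W$ admits no such swap, $W$ must satisfy \ejr, which is exactly the claim. The genuinely hard part is already behind us — it lives in Lemma~\ref{lemma:crucial} — so the only subtlety left here is the termination bound, and in particular noticing that it is the $\frac{1}{2k^3}$ floor in the loop guard (not merely the monotonicity of the \pav-score) that makes the number of iterations finite.
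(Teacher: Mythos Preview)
Your proposal is correct and follows essentially the same approach as the paper: the committee returned when the \textbf{while} guard fails admits no swap gaining at least $\tfrac{1}{2k^3}$, and by the contrapositive of Lemma~\ref{lemma:crucial} it therefore satisfies \ejr. The only organizational difference is that you fold the termination argument into this proposition, whereas the paper states it here implicitly and proves the explicit iteration bound $2n(\ln k+1)k^3$ in the subsequent running-time proposition.
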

					\begin{proof}
					MaxSwapPAV returns a committee $W$ such that there exist no $c'\in W$ and $c\in C\setminus W$ s.t. $\textit{PAV-score}((W\setminus \{c'\})\cup  \{c\})-\textit{PAV-score}((W))\geq 1/2k^3$. By Lemma~\ref{lemma:crucial}, such a committee satisfies \ejr.
	\end{proof}
	
	\begin{proposition}
					MaxSwapPAV runs in polynomial time $O(n^2mk^4\ln k)$.
					\end{proposition}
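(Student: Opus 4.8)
The plan is to bound, separately, the number of iterations of the \textbf{while} loop and the cost of a single iteration, and then multiply.

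First I would bound the number of iterations. The only facts needed here are: (i) the \pav-score of any committee is non-negative; (ii) the \pav-score of any size-$k$ committee is at most $n\cdot H(k)\le n(\ln k+1)$, since each of the $n$ voters contributes at most $H(k)$; and (iii) the loop guard forces every executed swap to raise the \pav-score by at least $\tfrac{1}{2k^3}$ --- we always perform the swap attaining the maximum $\textit{diff}(c,c')$, and the guard guarantees that this maximum is at least $\tfrac{1}{2k^3}$. Since the \pav-score is strictly increasing along the run and stays in the interval $[0,\,n(\ln k+1)]$ while jumping up by at least $\tfrac{1}{2k^3}$ each time, the loop performs at most $2k^3\,n(\ln k+1)=O(nk^3\ln k)$ iterations. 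Note that Lemma~\ref{lemma:crucial} is not used here; it is needed only for the previous proposition (that the returned committee satisfies \ejr), which follows because termination means no swap of size $\ge\tfrac{1}{2k^3}$ exists.

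Next I would bound the cost of one iteration. The trick is to maintain, as auxiliary data, the value $r_i=|A_i\cap W|$ for every voter $i$, together with the list of approvers of each candidate and a precomputed table of $H(0),H(1),\dots,H(k)$ (one-time cost $O(k)$, dominated). Given these, for any pair $c'\in W$, $c\in C\setminus W$,
\[
  \textit{diff}(c,c') \;=\; \sum_{i\in N}\Bigl(H\bigl(r_i-\mathbf{1}[c'\in A_i]+\mathbf{1}[c\in A_i]\bigr)-H(r_i)\Bigr),
\]
and the summand is zero for every voter approving neither $c$ nor $c'$; hence $\textit{diff}(c,c')$ is computable in $O(n)$ time. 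The \textbf{for} loop ranges over the $k(m-k)=O(km)$ such pairs, so it costs $O(nmk)$; evaluating the loop guard costs no more than this and is subsumed; selecting the maximiser is $O(km)$; and performing the chosen swap while refreshing all $r_i$ is $O(n)$. Thus one iteration runs in $O(nmk)$ time.

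Multiplying the two bounds gives total running time $O(nk^3\ln k)\cdot O(nmk)=O(n^2mk^4\ln k)$, as claimed. I expect the only delicate point to be the per-iteration analysis: recomputing a committee's \pav-score from scratch costs $\Theta(nk)$, which would blow each iteration up to $O(nmk^2)$ and overshoot the target exponent, so the incremental bookkeeping on the $r_i$'s --- which brings each $\textit{diff}$ evaluation down to $O(n)$ --- is exactly what is needed to hit the stated bound.
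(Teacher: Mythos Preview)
Your proof is correct and follows essentially the same decomposition as the paper: bound the number of iterations by $2nk^3(\ln k+1)$ via the $\tfrac{1}{2k^3}$ increment and the $n(\ln k+1)$ ceiling on the \pav-score, then bound each iteration by $O(km)$ pairs at $O(n)$ work per pair. Your observation that the per-swap gain of $\tfrac{1}{2k^3}$ follows directly from the loop guard (rather than from Lemma~\ref{lemma:crucial}, which the paper cites here but which is really only needed for the \ejr-correctness proposition) is in fact sharper than the paper's own phrasing, and your explicit bookkeeping with the $r_i$'s is a useful fleshing-out of the paper's bare assertion that each pair costs $O(n)$.
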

					\begin{proof}
						We first show that the total number of swaps in MaxSwapPAV cannot exceed $2n(\ln k+1)k^3$. In Lemma~\ref{lemma:crucial}, we proved that each swap in the algorithm improves the \pav-score by at least $\frac{1}{2k^3}$. Since the \pav score of any committee cannot exceed $n(\ln k+1)$, there can be at most $2n(\ln k+1)k^3$ swaps. Each swap requires examining $O(km)$ pairs of candidates. For each pair, we need to make $O(n)$ operations. 
	\end{proof}

It follows from the two propositions above  that MaxSwapPAV returns a committee satisfying \ejr in polynomial time.

%
%
%
%
%
%
%

\section{Discussion}

To conclude, we presented the first polynomial-time algorithm for finding a committee that satisfies \ejr. In Table~\ref{table:jrrules}, we summarize the justified representation related properties satisfied by different polynomial-time algorithms in the literature.

\begin{table}[h!]
	
\begin{center}
\begin{tabular}{lccc}
\toprule
&\jr&\pjr&\ejr\\
Rules&&&\\
\midrule
MaxSwapPAV (this paper)&\checkmark&\checkmark&\checkmark\\
SeqPhragm\'{e}n~\citep{BFJL16a,Phra94a}&\checkmark&\checkmark&\cross\\
Open D’Hondt (ODH)~\citep{SFF16a} &\checkmark&\checkmark&\cross\\
GreedyAV~\citep{ABC+16a,Thie95a} &\checkmark&\cross&\cross\\
SeqPAV~\citep{Thie95a} &\cross&\cross&\cross\\
\bottomrule
\end{tabular}
\end{center}
\caption{Related and known polynomial-time algorithms for approval-based committee voting. 
}
\label{table:jrrules}
\end{table}

Our result shows that \ejr is as amenable to efficient computation as \pjr. 
Depending on particular specifications, our algorithm to find a committee satisfying \ejr can also used to formulate particular voting rules.

\citet{SFL16a} mentioned that SeqPAV can be seen as a desirable approximation algorithm for PAV. Our alternative approach of allowing exchanges rather than sequentially building a committee seems to be closer to one of the defining features of \pav that it satisfies \ejr. 

The approach of implementing swaps of candidates also makes it possible to move towards fairer representation from a default committee without having to disband the whole committee. The swapping procedure can also be used as post-processing step after running any other committee rule. If the initial committee is the outcome of SeqPAV, then we know that the committee already guarantees at least $(1-\frac{1}{e})$ of the maximum possible 
\pav-score~\citep{SFL16a}. Hence it follows that subsequent \pav-score improving swaps can only further increase the score.

Our algorithmic result also adds a new talking point to the debate between the harmonic scoring approach of Thiele versus the load balancing approach of Phragm\'{e}n that started over a hundred years ago~\citep{Jans16a}. Note that \citet{BFJL16a} showed that SeqPhragm\'{e}n---one of the efficient algorithms within Phragm\'{e}n's framework of multi-winner rules---satisfies \pjr. On the other hand, SeqPAV the well-known polynomial-time algorithm using  Thiele's approach of harmonic weights does not even satisfy \jr. However, we have shown that by allowing swaps of candidates, one can satisfy \ejr which is stronger than \pjr.


 \end{document}